\def\qed{\relax\ifmmode\hskip2em \fbox{ }\else\unskip\nobreak\hskip1em 
$\fbox{}$\fi}
\newsavebox{\theorembox}
\newsavebox{\lemmabox}
\newsavebox{\corollarybox}
\newsavebox{\propositionbox}
\newsavebox{\examplebox}
\newsavebox{\propertybox}
\savebox{\theorembox}{\bf Theorem}
\savebox{\lemmabox}{\bf Lemma}
\savebox{\corollarybox}{\bf Corollary}
\savebox{\propositionbox}{\bf Proposition}
\savebox{\examplebox}{\bf Example}
\savebox{\propertybox}{\bf Property}
\newtheorem{theorem}{\usebox{\theorembox}}
\newtheorem{lemma}[theorem]{\usebox{\lemmabox}}
\newtheorem{definition}{{\sc Definition}\rm }[section]
\newtheorem{definitions}[definition]{{\sc Definitions\rm }}
\newenvironment{proof}{{\noindent\bf Proof~}}{\(\qed\)\vspace*{\proofskip} }
\newlength{\proofskip}
\begin{document}

\begin{center}

{\LARGE 
An Optimal Algorithm for 1-D Cutting Stock Problem
}


\footnotesize

\mbox{\large Srikrishnan Divakaran}\\
School of Engineering and Applied Sciences, Ahmedabad University, 
Ahmedabad, Gujarat, India 380 009,
\mbox{
Srikrishnan.divakaran@ahduni.edu.in }\\[6pt]
\normalsize
\end{center}

\baselineskip 20pt plus .3pt minus .1pt


\noindent 
\begin{abstract}
We present an   $n\Delta^{O(k^2)}$     time algorithm to obtain an optimal solution        for $1$-dimensional cutting stock problem: the bin packing problem of packing $n$ items           onto unit capacity  bins under  the restriction that the number of item sizes $k$ is fixed, where  $\Delta$ is the reciprocal of      the size of the smallest item. We employ elementary ideas in both the design and analysis our algorithm.
\end{abstract}
\bigskip
\noindent {\it Key words:}  
Keywords: Bin Packing; Cutting    Stock Problems; Approximation Algorithms; Approximation Schemes; Design and Analysis of Algorithms.
\noindent\hrulefill
\section{Introduction}
Let      $L$ be an input sequence $L= (a_1,a_2, ...,a_n)$ of $n$ items such  that their sizes are  in the interval $(0, 1)$ and   the number of distinct item sizes is at most $k$. Our goal is to pack the items in $L$ that uses a minimum number   of   unit capacity bins. This problem is a special case of {\em bin packing} where the number of  distinct item sizes is restricted to $k$ and is        often referred to as the {\em 1-dimensional cutting stock problem}. The study of this problem was initiated by     Gilmore and Gomory \cite{GG61} and has  a   wide variety of       applications  \cite{JDUGG74}  including cutting stock applications, packing  problems   in   supply chain management, and resource  allocation problems in distributed   systems. The  bin packing problem  for arbitrary number of item types is        known  to  be strongly NP-Hard \cite{J92} and         hence the interest in the design of efficient algorithms and approximation schemes for bin packing under the  restriction that the number of item types $k$        is fixed.  For  $k=2$, McCormick et al.  \cite{MSS97}    presented     a polynomial time algorithm for determining an optimal solution. For $k \ge 3$, Filippi et al. \cite{FA05} generalized the argument in \cite{MSS97} to    find a polynomial time solution that uses at most $k-2$ bins more  than an optimal       solution. For $d \ge 3$, Jansen and Solis-Oba \cite{JSO10} presented         an algorithm that uses $OPT + 1$ bins and runs in $2^{2^{O(k)}}* (log \Delta)^{O(1)}$, where $\Delta$ is the reciprocal of the size of the smallest item. More  recently, polynomial time   solvability of this problem was settled by the  $(log\Delta)^{2^{O(k )}}$ time optimal algorithm of Goemans et al. \cite{GR14}.   For  arbitrary $k$, the first PTAS was presented by Fernandez de la vega et al \cite{FL81}
, subsequently Karmarkar and Karp \cite{KK82} presented an asymptotic FPTAS that uses    $OPT + O(log^{2}(k))$ bins and Rothvob \cite{R13}  presented an algorithm that          uses $OPT + O(log(k)*log(log(k)))$ bins. Both these algorithms run in pseudo   polynomial time (polynomial on $\sum_{[1..k]}^{} a_i$, where     $a_i$ is the   number of items of the $i$th size). \newline \newline	
{\bf Our Results}:
In  this paper we present an $n\Delta^{O(k^2)}$  time algorithm       to obtain an optimal solution for the bin packing problem under  the restriction that the  number of item sizes $k$ is fixed.
Our main contribution is in the  use of elementary ideas in both the design and analysis our algorithm. The     recent result of    Goemans and Rothvob \cite{GR14} is theoretically a stronger result than ours, but    for moderate to large $k$ our algorithm is computationally better.
\section{An Optimal Algorithm for $1$-dimensional Cutting Stock Problem}
In this section, we first present     some  necessary terms and definitions before presenting an algorithm  that given an input sequence $L$ consisting  of $n$ items from at most $k$ distinct     item sizes determines an optimal bin packing in $O(n\Delta^{k^2})$, where       $\Delta =  \max_{i}^{}\lceil \frac{1} {s_i} \rceil$ be the reciprocal of the smallest item size in $L$. 
\subsection{Preliminaries}
\begin{definitions} 
The sequence $L=(a_1,a_2, ..., a_n)$ with $k$ distinct item sizes \{   $s_1,  s_2,  ..., s_k$\} can be viewed as 
a  $k$ dimensional vector $\vec{d(L)} = (n_1*s_1, n_2 *s_2,...,n_k*s_k)$, where for  $i\in [1..k]$, $n_i$ is the number of items of type  $i$ (size $s_i$); we  refer to $\vec{d(L)}$ as the distribution vector corresponding to $L$. Given   the distribution vector $\vec{d(L)}$, we define  $N^{1}_{L}(\vec{d(L)})$, the unit  neighborhood of $\vec{d(L)}$ with         respect to $L$, to be the collection of points in $R^{k}$ such that each point in this collection     is within unit distance ($l_1$ norm) from some point on $\vec{d(L)}$, and for $i \in [1..k]$, its $i$th  component is an integer multiple of $s_i$.
\end{definitions}
\begin{definitions} 
A unit capacity bin $B$ with a collection of items from $L$ and  a free space  of   at most $\delta$, for   some $\delta\in [0,1]$, can be characterized by a $k$-dimensional vector whose $i$th component, $i\in [1..k]$, equals sum of items of size $s_i$ in $B$. We    refer to   such a vector as $(1-\delta)$ vector consistent with $L$. An $(1-\delta)$ vector that is consistent with $L$      is maximal if addition of any item from $\{s_1, s_2,...,s_k \}$ to     the corresponding bin results   in   the bin exceeding its capacity. 
\end{definitions}
{\bf Note}: We   use   the $l_1$ norm to measure distances and sometimes use vectors and points interchangeably.
\begin{definitions}
Let            $\cal{V}(L)$   denote the  set   of all maximal  $(1-\delta)$ vectors  consistent with $L$,   for $\delta \in (0, 1]$.  For     a given multi-set $V_c(L)=\{\vec{v_1},\vec{v_2},..., \vec{v_c}\}$ of  $c$  vectors   from $\cal{V}(L)$        not necessarily distinct and a point $\vec{p}\in R^{k}$, let $Sum^{p}(V_{c}(L))=\vec{p} +\sum_{i\in [1..c]}^{} \vec{v_i}$ denote the resulting vector obtained by starting at $\vec{p}$ and then placing the $c$ vectors in $V_c(L)$. Given  a   point $\vec{p}\in R^{k}$, 
\begin{itemize}
\item [-] We say that starting at $\vec{p}$ the multi-set   $V_{c}(L)$  crosses  $\vec{d(L)}$ at point $q$ if $q \in \vec{d(L)}$ and   $Sum^{p}(V_{c}(L))$ dominates $q$ (i.e. each of the $k$ components  of $q$ is less than or equal to the corresponding component of $Sum^{p}(V_{c}(L))$;
\item [-] We define $Reachable^{p}(L)$, the  set of points reachable from $\vec{p}$ with respect to $L$, as  the collection of points such that each point is   in $N^{1}_{L}(\vec{d(L)})$ and can be expressed as $Sum^{p}(V_{c} (L))$ for some  $V_{c}(L) \in \        $$\cal{V}$$(L)$, where $c\in [1..k]$;
\item [-] For $q \in Reachable^{p}(L)$, we    define  $RSET^{p}(q)$ to  be     a smallest sized multi-set $V \in \cal{V}(L)$ for which $q=Sum^{p}(V))$. 
\end{itemize}
\end{definitions}
\begin{lemma}
\label{one}
Let $L=(a_1,a_2,..., a_n)$ be a sequence with $k$ distinct item sizes   \{ $s_1,s_2,...,s_k$\}. Let $V^{OPT}(L)$
be the multi-set consisting of $(1-\delta)$-vectors from $\cal{V}(L)$ (i.e.    bin configurations) corresponding 
to an optimal allocation  of items in    $L$ onto unit capacity bins. For some $c \in (1,k+1]$   there  exists a $c$-sized minimal sub-set   $V^{OPT}_{c}(L)\subseteq  V^{OPT}(L)$ such that   starting at $\vec{0}$ or any point that    dominates $\vec{0}$ and is within a unit distance of $\vec{0}$, crosses a point $p$ on $\vec{d(L)}$ at a distance less than or equal to $k$ from the starting point.
\end{lemma}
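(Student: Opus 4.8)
The plan is to read the statement geometrically. The distribution vector $\vec{d(L)}$ is the segment from $\vec{0}$ to the demand point $(n_1 s_1,\dots,n_k s_k)$, and an optimal allocation $V^{OPT}(L)$ is simply a multiset of maximal bin-vectors whose sum is exactly this endpoint; hence $V^{OPT}(L)$ already dominates --- ``crosses'' --- every point of $\vec{d(L)}$. The job is therefore to carve out of $V^{OPT}(L)$ a short initial \emph{step}: an inclusion-minimal sub-multiset that still crosses \emph{some} point $p$ of $\vec{d(L)}$, and then to control both its size $c$ and the distance from the chosen starting point to $p$.

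First I would fix a start $\vec{s}$ with $\vec{0}\preceq\vec{s}$ and $\|\vec{s}\|_1\le 1$ and feed in the bins of $V^{OPT}(L)$ one at a time, at each step taking the bin that raises the progress ratio $\tau(\vec{x})=\min_{i} x_i/(n_i s_i)$ the most. Since every $v\in V^{OPT}(L)$ has non-negative coordinates, $\tau$ is non-decreasing along this process, and because the complete sum is the endpoint it eventually reaches $1$; so the running sum does cross $\vec{d(L)}$. I would stop at the first step at which the crossed point is a genuine advance over what $\vec{s}$ alone crosses, let $W$ be that prefix with sum $S_W$, discard any bin of $W$ not actually needed for the domination $\vec{s}+S_W\succeq p$, and take the result to be $V^{OPT}_{c}(L)$, with $p$ the segment point it crosses.

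The distance bound is then almost free: $p$ lies on $\vec{d(L)}$ and is dominated by $\vec{s}+S_W$, so $\|p\|_1\le\|\vec{s}\|_1+\sum_{v\in W}(1-\delta_v)<\|\vec{s}\|_1+|W|$, whence $\|p-\vec{s}\|_1<|W|$; a short $l_1$-counting refinement that uses minimality (deleting the last bin added must push some coordinate of $\vec{s}+S_W$ below $p$) tightens this to $\le k$. The real work is the cardinality bound $c\le k+1$, which I would get by a pigeonhole over the $k$ coordinates: in an inclusion-minimal crossing sub-multiset every bin is indispensable, i.e.\ each bin alone rescues some coordinate inequality of $\vec{s}+S_W\succeq p$, and since there are $k$ coordinates (together with the single binding coordinate that pins down which point of $\vec{d(L)}$ is reached) at most $k+1$ bins can survive as indispensable. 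Finally $c>1$, because a maximal bin has $l_1$-content strictly below $1$ and so cannot by itself advance past the chosen point --- the lone exception being when the whole demand is so small that this first step already reaches the endpoint, a boundary case the statement absorbs.

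The step I expect to fight with is exactly this cardinality bound. The naive pigeonhole only yields ``one indispensable bin per coordinate'', yet when the slacks $(\vec{s}+S_W)_i-p_i$ are tiny several bins can be indispensable for the \emph{same} coordinate; so the argument cannot be run against an arbitrary crossed point but must be pinned to a carefully chosen target --- the minimal advance, equivalently the first relevant lattice point of $N^1_L(\vec{d(L)})$ beyond $\vec{s}$ --- and it must use that the bins come from a genuinely optimal packing so that no coordinate can be ``over-rescued'' by superfluous bins. Getting this coupling between the choice of $p$, the greedy order, and the pigeonhole right, rather than the routine metric estimates, is where the difficulty lies.
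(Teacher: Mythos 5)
Your construction is reasonable up to its last step, but the step you yourself flag as the hard one --- the cardinality bound $c\le k+1$ --- is a genuine gap, and the pigeonhole you sketch does not close it. Inclusion-minimality of $W$ with respect to dominating a \emph{fixed} point $p$ only says that each bin is necessary for \emph{at least one} coordinate inequality; it does not give an injection from bins to coordinates. Whenever the slack $(\vec{s}+S_W)_i-p_i$ is smaller than the $i$-th component of several bins of $W$, all of those bins are simultaneously indispensable for the same coordinate $i$, so ``at most one surviving bin per coordinate'' is simply false, and the quantitative content of the lemma is exactly what remains unproved. The natural repair is to stop pinning $p$ in advance: a sub-multiset $W$ crosses $\vec{d(L)}$ at some point other than the origin iff $\vec{s}+S_W$ is positive in every coordinate, i.e.\ iff the bins of $W$ jointly ``cover'' every item type not already covered by $\vec{s}$. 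For that covering property inclusion-minimality does yield the injection (each surviving bin must be the \emph{unique} bin of $W$ touching some type), hence $|W|\le k$, and the crossing point $t^{*}\vec{d(L)}$ with $t^{*}=\min_i(\vec{s}+S_W)_i/d_i$ satisfies $\|t^{*}\vec{d(L)}\|_1\le\|\vec{s}+S_W\|_1\le |W|+1$, which gives the distance bound up to the additive constant the paper is cavalier about. Your metric estimate needs the same care: in general $\|p-\vec{s}\|_1\le\|S_W\|_1+\|\vec{s}\|_1$ (you may not drop the $\|\vec{s}\|_1$ term unless $p$ dominates $\vec{s}$), so the claimed tightening to $\le k$ is not ``almost free.''

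For contrast, the paper proves the lemma by an entirely different (and much coarser) route: it assumes that no $c$-sized sub-multiset crosses $\vec{d(L)}$ within distance $k$ of the start, partitions all of $V^{OPT}(L)$ into $c$-sized groups, and argues that the total sum then cannot dominate the tip of $\vec{d(L)}$, contradicting feasibility of the optimal packing. Your direct greedy-plus-minimality construction is closer in spirit to a rigorous argument than that partitioning step, but as written it does not yet establish the bound $c\le k+1$ on which the algorithm's running time rests.
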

\begin{proof}
We prove this lemma by contradiction. Let   us    assume that for any   $c \in (1, k+1)$ there does not  exist a multi-set $V^{OPT}_{c} \subseteq V^{OPT}(L)$ of   $c$   vectors   that  crosses $\vec{d(L)}$ at any  point whose distance from the starting point is in the interval $(0, k]$. Then if  we  partition the vectors in $V^{OPT}(L)$ into multi-sets of $c$  vectors, then      since none of the $c$  sized  multi-set   crosses $\vec{d(L)}$ in the interval $(0, k]$ the vector   sum of all the $c$      sized multi-set  will not dominate $\vec{d(L)}$ and hence will    not cross the tip of $\vec{d(L)}$ implying  that $V^{OPT}(L)$ is  missing  some item in $L$ and hence is not a feasible solution. Hence the result.
\end{proof} 
\subsection{Our Algorithm}
Let $V^{OPT}(L)$ be a collection of $(1-\delta)$-vectors from $\cal{V}(L)$ (configuration of unit capacity bins) used for optimally packing items in $L$. From Lemma $\ref{one}$, we  can  observe  that   there  is  a $c$-sized minimal multi-set $V_{c}^{OPT}(L)\subseteq V^{OPT}(L)$, where  $c \in  (0, k+1]$, such     that  starting at the   origin $\vec{0}$ (or any point that dominates $\vec{0}$ and is within unit distance from $\vec{0}$)  the  vector $Sum^{\vec{0}} (V_{c}(OPT(L)))$  crosses  $\vec {d(L)}$ at a point $p$ which is at a $l_1$ distance $d\in (0,k]$ from $\vec{0}$ and ends up at a point within an unit distance of $p$. This implies, we   can  obtain  an optimal solution by (i) determining   the minimum sized  multi-set from $\cal{V}(L)$ that  starts at the origin, crosses $\vec{d(L)}$ 
at $\vec{p}$ whose distance   $d$ from the origin is less than or equal to  $k$ and ends up   at some point that 
is within an unit distance of $\vec{p}$;  and (ii) determine the  minimum sized multi-set from $\cal{V}(L)$ that starts at the point where the first sub-problem ends, crosses    the tip of $\vec{d(L)}$, and ends up at a point within  a unit distance of the tip of $\vec{d(L)}$. Solving these  two sub-problems essentially involves solving the following problem: 
\begin{quote}
Given the start and end points $\vec{v_b}$ and         $\vec{v_e}$ respectively both in $N^{1}_{L}(\vec{d(L)})$, determine a minimum sized collection $V$ of $(1-\delta)$ vectors from $\cal{V}(L)$ such        that $\vec{v_e} = Sum^{\vec{v_b}}(V) \in Reachable^{\vec{b}}(L)$.
\end{quote}
We   now present a recursive algorithm for solving the above problem that can be easily converted into a dynamic 
program.
\begin{tabbing}
{\bf ALGORITHM A($L$, $\vec{v_b}$, $\vec{v_e}$)} \\
	Input(s): \=  (1) \= $L$\ \ \ \ \ \ \ \ \ \ \ \= = $(a_1,a_2,...,a_n)$ \  be the \ sequence \    of \ $n$
	items with  $k$ distinct  item \\ 
	\>      \>        \> \ \ \ \ sizes    \{ $s_1,s_2,...,s_k$\}; \\
	\>            (2) \> $\vec{v_b}$   \>= $(b_1, b_2, ..., b_k)$ be the coordinate of the beginning point \\
	\>            (3) \> $\vec{v_e}$   \>= $(e_1, e_2, ..., e_k)$ be the the coordinate of end point \\
	Output(s): \= The\  collection  of  $(1-\delta)$ \ vectors \ consistent \  with \ $L$ \ for\  some 
	\ $\delta \in (0, 1)$ \  (i.e. \\
	\>                (configuration of unit capacity bins) \ of \ an optimal bin packing of items in $L$; \\
	{\bf Pre}\={\bf -processing}: \\
	\> (1) \= Construct $N^{1}_{L}(\vec{d(L)})$ the unit neighborhood of $\vec{d(L)}$ with respect to $L$;
	\\
    \> (2) \= For \= an integer $c \in [1..k]$\\
	\>      \> Begin \\
	\> (2a)\>     \> Construct $\cal{V}$$_{c}$ the collection containing   all possible $c$-sized multi-sets of
	 $(1-\delta)$\\
	\>     \>  	  \> vectors consistent with $L$; \\
	\> (2b)\>     \> For \= each point $\vec{p} \in N^{1}_{L}(\vec{d(L)})$ and for each $V \in \cal{V}$$_{c}$ \\
	\>     \>     \> Begin \\
	\>     \>     \>     \> If \= $(Sum^{\vec{p}}(V) \in N^{1}_{L}(\vec{d(L)})$ then \\
    \>     \>     \>     \>    \> $Reachable^{\vec{p}}(L)=Reachable^{\vec{p}}(L) \cup Sum^{\vec{p}}(V)$; \\
	\>     \>     \>     \> If \= $(RSET^{\vec{p}}(Sum^{\vec{p}}(V)) == NULL)$ then \\ 
	\>     \>     \>     \>    \> $RSET^{\vec{p}}(Sum^{\vec{p}}(V)) = V$; \\
    \>     \>     \> End \\
    \>     \>  End \\
	{\bf Recursion}: \\
	Beg\=in \\
	\> If \  \ \= $(\vec{v_{e}} \in Reachable^{\vec{v_b}}(L))$ then \\
	\>         \> return $|RSET^{\vec{v_b}}(\vec{v_e})|$; \\
	\> else\= \\
	\>     \>  return $\min_{\vec{p} \in Reachable^{\vec{v_b}}(L)}^{} \{ |RSET^{\vec{v_b}}(\vec{p})| + 
                                                                          A(L, \vec{p},  \vec{v_e}) \}$; \\ 
	End 
\end{tabbing}
{\bf Observation $1$}: The size $|N^{1}_{L}(\vec{d(L)})|$ of the   unit neighborhood of the distribution vector $\vec{d(L)}$ consistent with $L$ is $O(n\Delta^{k})$ and can be     computed in $O(n\Delta^{k})$ time; \newline 
{\bf Observation $2$}: The size $\cal{V}(L)$ of the number of $(1-\delta)$-vectors consistent     with $L$, for 
$\delta \in (0, 1]$, is upper bounded by $\Delta^{k}$, and the size of $\cal{V}$$_{c}(L)$, for $c\in [1..k]$ is
upper bounded by $\Delta^{k} \choose{k}$ = $(\Delta^{k})^{k} = \Delta^{k^2}$; \newline
{\bf Observation $3$}: The number of distinct sub-problems solved by ALGORITHM A($L$, $\vec{v_b}$, $\vec{v_e}$)
is upper bounded by $O(n^2\Delta^{2k})$.
\begin{theorem}
Let $L=(a_1,a_2,...,a_n)$ be a sequence with $k$ distinct item sizes \{ $s_1,s_2,...,s_k$\} and $N^{1}_{l}(\vec
{d(L)})$ be the collection of points in the $1$-neighborhood of $\vec{d(L)}$ consistent with $L$, where $\vec{d
(L)}$  is   the distribution vector of $L$, then the minimum number of bins used to pack the items in $L$ using
unit capacity bins is $\min_{\vec{e}\in N^{1}_{L}(\vec{d(L)})}^{} A(L,\vec{0},\vec{e})$   and can be determined 
in $O(n\Delta^{k^2})$ time.
\end{theorem}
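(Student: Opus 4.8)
The plan is to prove two things: that $\min_{\vec{e}\in N^{1}_{L}(\vec{d(L)})}A(L,\vec{0},\vec{e})$ equals the optimum $\OPT$, and that it is computable in $O(n\Delta^{k^2})$ time. The organising idea is a dictionary between bin packings of $L$ and walks in the lattice generated by the item sizes: a multi-set $\{\vec{v_1},\dots,\vec{v_t}\}$ of vectors from $\mathcal{V}(L)$ with $\vec{v_1}+\dots+\vec{v_t}=\vec{q}$ describes $t$ bins whose combined content of each size $s_i$ is the $i$-th coordinate of $\vec{q}$; when $\vec{q}$ dominates $\vec{d(L)}$ this is, after deleting the ``phantom'' items accounted for by $\vec{q}-\vec{d(L)}$, a feasible packing of $L$ into $t$ bins. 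Conversely, given an optimal packing one can fill each under-full bin with items of sizes already present until it becomes a maximal configuration, turning the packing into a multi-set from $\mathcal{V}(L)$ whose sum $\vec{e}^{\ast}$ dominates $\vec{d(L)}$. I would first record this correspondence, and then verify by a routine induction over the recursion that, for $\vec{v_b},\vec{v_e}\in N^{1}_{L}(\vec{d(L)})$, $A(L,\vec{v_b},\vec{v_e})$ is exactly the least number of vectors of $\mathcal{V}(L)$ in a walk from $\vec{v_b}$ to $\vec{v_e}$ that passes only through points of $N^{1}_{L}(\vec{d(L)})$, consecutive such points differing by the sum of at most $k$ configurations (the base case $\vec{v_e}\in Reachable^{\vec{v_b}}(L)$ being handled by $RSET$, the step by the displayed minimum).

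With this set up, the two inequalities are short. For ``$\ge$'': if $\vec{e}\in N^{1}_{L}(\vec{d(L)})$ dominates $\vec{d(L)}$ and $A(L,\vec{0},\vec{e})=t<\infty$, the dictionary yields a packing of $L$ into $t$ bins, so $t\ge\OPT$; the other $\vec{e}$ cannot lower the minimum. For ``$\le$'': start from the padded optimum, a multi-set $W$ of $\OPT$ maximal configurations with $\sum W=\vec{e}^{\ast}$ dominating $\vec{d(L)}$. Order the elements of $W$ so that every prefix sum lies in $N^{1}_{L}(\vec{d(L)})$, and apply Lemma~\ref{one} repeatedly to the residual problems: at each stage a minimal sub-multiset of the not-yet-placed configurations crosses the segment $\vec{d(L)}$ within distance $k$ and leaves the walk within a unit of it, hence advances one checkpoint of $N^{1}_{L}(\vec{d(L)})$ to the next using at most $k$ configurations. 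This exhibits a walk of exactly the kind $A$ minimises over, of total length $\OPT$, from $\vec{0}$ to $\vec{e}^{\ast}\in N^{1}_{L}(\vec{d(L)})$, so the displayed minimum is $\le\OPT$, and equality follows.

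For the running time I would combine the three Observations. By Observation~1 the pre-processing constructs $N^{1}_{L}(\vec{d(L)})$ in $O(n\Delta^{k})$ time and $|N^{1}_{L}(\vec{d(L)})|=O(n\Delta^{k})$; by Observation~2, for each $c\le k$ there are at most $\Delta^{k^2}$ size-$c$ multi-sets of configurations, so filling in $Reachable^{\vec{p}}$ and $RSET^{\vec{p}}$ for all $\vec{p}\in N^{1}_{L}(\vec{d(L)})$ costs $O(n\Delta^{k^2})$ (absorbing factors polynomial in $k$ and of lower order in $\Delta$). By Observation~3 the memoised recursion, run over all choices of $\vec{e}$, encounters only $O(n^2\Delta^{2k})$ distinct subproblems, each resolved by a minimum over at most $|N^{1}_{L}(\vec{d(L)})|$ already-tabulated values; this is dominated by the pre-processing, giving a total of $O(n\Delta^{k^2})$.

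The step I expect to carry the real weight is the ``$\le$'' direction, specifically the claim that \emph{some} optimal packing, after its under-full bins are padded to maximal configurations, can be ordered so that \emph{every} prefix sum --- the final one $\vec{e}^{\ast}$ included --- stays within unit $\ell_1$-distance of the segment $\vec{d(L)}$. Lemma~\ref{one} is exactly what localises each crossing to within distance $k$ and so bounds the block sizes by $k$; but one still has to argue that the padding does not push the walk, and in particular its endpoint, more than a unit past the segment, since otherwise the checkpoints leave $N^{1}_{L}(\vec{d(L)})$ and the recursion of ALGORITHM A never discovers the optimal decomposition. Everything else --- the dictionary, the induction, and the time analysis --- is bookkeeping.
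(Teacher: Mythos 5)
Your proposal follows essentially the same route as the paper's proof: Lemma~\ref{one} is used to decompose an optimal (padded) packing into blocks of at most $k$ maximal configurations whose partial sums remain in $N^{1}_{L}(\vec{d(L)})$, the recursion of ALGORITHM A is shown to compute the shortest such walk, and the running time is read off from Observations 1--3 exactly as the paper does. The step you flag as carrying the real weight --- that the padded optimum can be ordered so that every prefix sum, the endpoint included, stays within unit $\ell_1$-distance of $\vec{d(L)}$ --- is indeed the crux, and the paper's own proof asserts it with no more justification than you give (it simply re-invokes Lemma~\ref{one} and appeals to a ``cut and paste'' substructure argument), so your treatment is, if anything, the more explicit of the two.
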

\begin{proof}
Let $V^{OPT}(L)$ be a   collection of $(1-\delta)$-vectors from $\cal{V}(L)$   used for optimally packing items 
in $L$. From   Lemma $\ref{one}$, we  can observe that there is a $c$-sized   minimal multi-set $V_{c}^{OPT}(L) \subseteq V^{OPT}(L)$, where $c \in  (0, k+1]$, such that  starting at the   origin, the  vector $Sum^{\vec{0}} (V_{c}(OPT)(L))$ crosses $\vec {d(L)}$ at a point whose $l_1$ distance $d\in (0,k]$ from the origin and ends up 
at a point within an unit distance from that crossing point. Hence,     we   can obtain an  optimal solution by
(i) determining  the minimum sized  multi-set from $\cal{V}(L)$ that starts at the origin, crosses $\vec{d(L)}$ 
at $\vec{p}$ whose distance  $d$ from the origin is less than or equal to  $k$ and ends up   at some point that 
is within an unit distance of $\vec{p}$;  and (ii) determine the minimum sized multi-set from $\cal{V}(L)$ that starts at the point where the first sub-problem ends, crosses   the tip of $\vec{d(L)}$, and ends up at a point within  a unit distance of the tip of $\vec{d(L)}$. This essentially 
reduces to solving the following problem recursively: 
Given the start and end points $\vec{v_b}$ and         $\vec{v_e}$ respectively both in $N^{1}_{L}(\vec{d(L)})$, determine a minimum sized collection $V$ of $(1-\delta)$ vectors from $\cal{V}(L)$ such        that $\vec{v_e} = Sum^{\vec{v_b}}(V) \in Reachable^{\vec{b}}(L)$. For this  recursive formulation, using a cut and paste argument we can easily 
see that the sub-structure property is true. \newline 
\newline 
Now, we establish the computational complexity of our algorithm. From Observations $1$ and $2$, we can notice that the run-time for the pre-processing steps $1$ and $2$ is dominated by Step $2b$ and is $n\Delta^{O(k^2)}$. From Observation $3$,  we can observe the number of distinct sub-problems is $O(n^2\Delta^{2k})$ and each sub-problem  depends on at most 
$O(n\Delta^{k})$ sub-problems, hence the run-time for the recursion is $O(n^{3}\Delta^{3k})$. Combining     the 
pre-processing time and the time for recursion, we get the run-time of our algorithm is  $max \{ n\Delta^{O(k^2
)}, O(n^{3}\Delta^{3k}) \}$. 
\end{proof}
\section{Conclusions}
We      have designed a psudo-polynomial time algorithm for the Bin Packing problem under the restriction that the number of item types is at most $k$.
For large $k$, our    result    is   stronger than the currently best known theoeretical result of Goemans et al [GR14]. We are       in the process of analyzing        heuristics that are relaxed version of this algorithm that consider    only   a small   fraction of the bin configurations (please see Divakaran \cite{D19}) that in  practice provide good approximations and scale well computationally. 

\end{document}